





\documentclass[sn-mathphys]{sn-jnl}



\jyear{2021}%

\theoremstyle{thmstyleone}%
\newtheorem{theorem}{Theorem}
%

\theoremstyle{thmstyletwo}%
\newtheorem{remark}{Remark}%

\theoremstyle{thmstylethree}%

\raggedbottom
\newcommand{\bes}{\begin{equation*}}
\newcommand{\ees}{\end{equation*}}
\newcommand{\beq}{\begin{equation}}
\newcommand{\eeq}{\end{equation}}

\usepackage{multirow}
\usepackage{xcolor}

\usepackage[makeroom]{cancel}

\usepackage{xcolor}

\usepackage[makeroom]{cancel}
\newcommand{\diff}[2]{\frac{\mathrm {#1}}{\mathrm{#1}#2}}
\newcommand{\barr}{\begin{array}}
\newcommand{\earr}{\end{array}}

\usepackage[right]{lineno}
\usepackage{microtype}
\DisableLigatures[f]{encoding = *, family = * }


\usepackage{array}

\newcolumntype{+}{!{\vrule width 2pt}}

\newlength\savedwidth

\newcommand\thickhline{\noalign{\global\savedwidth\arrayrulewidth\global\arrayrulewidth 2pt}%
\hline
\noalign{\global\arrayrulewidth\savedwidth}}

\newcommand\scalemath[2]{\scalebox{#1}{\mbox{\ensuremath{\displaystyle #2}}}}

\begin{document}

\title[Understanding Biofilm-Phage Interactions in Cystic Fibrosis Patients Using Math]{Understanding Biofilm-Phage Interactions in Cystic Fibrosis Patients Using Mathematical Frameworks}


\author*[1]{\fnm{Blessing O.} \sur{Emerenini}}\email{boesma@rit.edu}

\author[2]{\fnm{Doris} \sur{Hartung}}\email{dthartung@alum.ship.edu}
\author[3]{\fnm{Ricardo N. G} \sur{Reyes Grimaldo}}\email{reyesgrr@oregonstate.edu}
\author[1]{\fnm{Claire}
\sur{Canner}}\email{cmc2511@rit.edu}

\author[4]{\fnm{Maya} \sur{Williams}}\email{willim54@tcnj.edu}

\author[1]{\fnm{Ephraim} \sur{Agyingi}}\email{eoasma@rit.edu}

\author[5]{\fnm{Robert} \sur{Osgood}}\email{rcoscl@rit.edu}


\affil[1]{\orgdiv{School of Mathematical Sciences}, \orgname{Rochester Institute of Technology}, \orgaddress{\street{One Lomb Memorial Drive}, \city{Rochester}, \postcode{14623}, \state{NY}, \country{USA}}}

\affil[2]{\orgdiv{Department of Mathematics, Shippensburg University}, \orgaddress{\street{1871 Old Main Dr}, \city{Shippensburg}, \postcode{17257}, \state{PA}, \country{USA}}}

\affil[3]{\orgdiv{Department of Integrative Biology, Oregon State University}, \orgaddress{\street{2701 SW Campus Way}, \city{Corvallis}, \postcode{97331}, \state{OR}, \country{USA}}}

\affil[4]{\orgdiv{Department of Mathematics}, \orgname{The College of New Jersey}, \orgaddress{\street{2000 Pennington Rd}, \city{Ewing Township}, \postcode{08618}, \state{NJ}, \country{USA}}}

\affil[5]{\orgdiv{College of Health Sciences and Technology}, \orgname{Rochester Institute of Technology}, \orgaddress{\street{One Lomb Memorial Drive}, \city{Rochester}, \postcode{14623}, \state{NY}, \country{USA}}}


\abstract{
    When planktonic bacteria adhere together to a surface, they begin to form biofilms, or 
    communities of bacteria. Biofilm formation in a host can be extremely problematic if left untreated, 
    especially since antibiotics can be ineffective in treating the bacteria. Certain lung diseases such 
    as cystic fibrosis can cause the formation of biofilms in the lungs and can be fatal. With 
    antibiotic-resistant bacteria, the use of phage therapy has been introduced as an alternative or an 
    additive to the use of antibiotics in order to combat biofilm growth. Phage therapy utilizes phages, 
    or viruses that attack bacteria, in order to penetrate and eradicate biofilms. In order to evaluate 
    the effectiveness of phage therapy against biofilm bacteria, we adapt an ordinary differential equation 
    model to describe the dynamics of phage-biofilm combat in the lungs. We then create our own 
    phage-biofilm model with ordinary differential equations and stochastic modeling. Then, simulations 
    of parameter alterations in both models are investigated to assess how they will affect the efficiency 
    of phage therapy against bacteria. By increasing the phage mortality rate, the biofilm growth can be 
    balanced and allow the biofilm to be more vulnerable to antibiotics. Thus, phage therapy is an effective 
    aid in biofilm treatment.
}

\keywords{Biofilm, Phages, mathematical modeling, stochastic modeling}



\maketitle

\section{Introduction}\label{sec1}

    Presence of pathogenic microorganisms in our environment entail enormous problems for humans and livestock. 
    The problem of pathogenic microrganisms is even grievous when they reside in host \cite{refr01}. 
    Bacteria is one of such pathogenic microorganisms and they prefer to live in communities called Biofilms. 

    Biofilms are aggregation of bacteria on immersed surfaces and interfaces, in which the cells are embedded 
    in a self-produced layer of extrecellular polymeric substances (EPS). The EPS gives them protection 
    against mechanical washout and antibiotics. The formation of biofilm is often considered a virulence 
    factor \cite{refr1}. Given the role that biofilms play in resistance, new treatments are promoted which 
    aim at penetrating the biofilm matrix and attacking the individual cells in the biofilm. Dissolved 
    growth-limiting substrates such as oxygen diffuse into the biofilm and undergo reaction with bacteria. 
    In many instances, in well-developed biofilms, such growth limiting substrates might only be able to 
    penetrate the biofilm over a relatively thin active outer layers and substantial inactive inner layers 
    may form. Several studies has shown that there are some immaterial substances and microbes which can 
    also penetrate the biofilm matrix, one of such microbes is the bacteriophages.

    Bacteriophage, also known informally as phage, is a virus that infects and replicates within bacteria 
    and archaea, it is among the most common disease entities in the biosphere. Bacteriophages are 
    exclusively used as therapeutic agents to treat infections caused by pathogenic bacteria. Application 
    of Phage therapy was dated more than a century ago with poor understanding of its potentials 
    \cite{refr2,refr3} and so was overshadowed in western medicine until the emergence of bacterial strains 
    which were resistant to antibiotics \cite{refr4}. The use of phage in treatments has a number of potential 
    advantages over the use of antibiotics \cite{refr5, refr6}. 

    This study focuses on the interaction between biofilm and bacteriophage in phage therapy. There are two 
    main approaches to studying phage-biofilm interactions: experimental approach, and the mathematical 
    modeling approach. Both methods are widely utilized and each have advantages. Using mathematical models 
    as a way to study disease dynamics is an extremely useful tool when it comes to the observation and 
    prevention of infections in humans \cite{refr07, refr08, refr8, refr9}. Through the use of mathematical 
    modeling, one is able to implement approaches in which infectious outbreaks can be predicted, assessed, and 
    controlled \cite{refr7, refr8, refr9}. In order to model these dynamics a form of model must be chosen. 
    These mathematical models can range from equation-based modeling, such as ordinary, partial, and stochastic 
    differential equations, to agent-based modeling \cite{refr10, refr11}. From different studies that utilized 
    mathematical modeling, findings have been made on in-host disease dynamics. Through the use of differential 
    equations in a study conducted by Beke et. al., the disease dynamics in a bacteria-phage interaction are 
    examined. The results of this study gathered that disease dynamics can be contingent on environmental 
    factors such as a change in pH or temperature \cite{refr12}. 
    In a separate study conducted by Bardina et. al., a different approach involving a stochastic model is used 
    in order to analyze these dynamics. Upon these findings, there existed equilibria such that pathogens could 
    be eliminated from the host or could persist depending on the levels of noise within the environment 
    \cite{refr13}. Similarly, through differential equations and Monte Carlo simulations, Sinha et. al. 
    evaluate which mathematical model is adequate for modeling the dynamics between phage and bacteria. In this 
    study, it was found that disease dynamics can differ if there are spatial restrictions introduced to the 
    model and the type of model used to describe such dynamics should reflect this restriction \cite{refr14}. 
    These mathematical models are not only used for examining in-host dynamics, but are also used to address 
    various phenomena. Some of these phenomena include spatial phenomena as previously mentioned, 
    evolutionary game theory, and dynamic optimization.

    Mathematical models for bacterial biofilm over the years have greatly helped in the understanding of 
    biofilm processes such as biofilm formation and growth; detachment and its inducers 
    \cite{refr15, refr015, refr16, refr17, refr18} . Many of the experimental and modeling studies of 
    biofilm-phage interactions and interplay has focused on biofilm formed on surfaces other than in-host, 
    mathematical modeling that focus on the biofilm-phage interactions and interplay in immunocompromised 
    patents is still in its infancy. There are several immunocompromised patents that suffer from biofilm 
    infection, we will be considering the case of biofilm formation in the lungs of Cystic Fibrosis patients.

    Over the last two decades, a large number of models have been produced in order to represent the 
    interactions between bacteria, biofilm, and bacteriophages. Some of these models involve ODEs, PDEs, 
    agent based modeling, or stochastic models, but all of which attempt to recreate results that could be 
    produced in an experiment in order to better and more quickly understand and predict these interactions. 
    In a recent review from Sinha, et. al., we find models utilizing PDE's, ODE's, stochastic differential 
    equations and Monte Carlo simulations. This review compares models subject to spatial constraints with 
    those without spatial constraints \cite{refr14}. 

    In this study, the objective is to develop a mathematical framework to understand the different factors 
    that contribute most during bacterial-phage interactions in biofilm setting and planktonic phage.
\section{Methods}
    \subsection{\emph{Basic model assumptions}}
        We develop a mathematical model that describes the dynamics of biofilm-phage interactions in matured 
        biofilms. There are two specific regions involved in this system, namely: the Biofilm region, this 
        is the region where the bacterial cells are accumulated and formed biofilm;  and the Planktonic region, 
        this is the flow region of the bronchiole comprising of air and fluid (see 
        \autoref{fig:schematicbiofilm}). We assume that (a) both bacteria and phages can be converted from 
        one region to the other (b) conversion of biofilm bacteria to planktonic bacteria is induced by 
        phages (c) phages conversion rate between biofilm and the planktonic phase is assumed to be constant, 
        and (d) the conversion of phages from one region to the other does not change their characteristics. 

        \begin{figure}[!ht]
            \centering
            \includegraphics[scale=0.25]{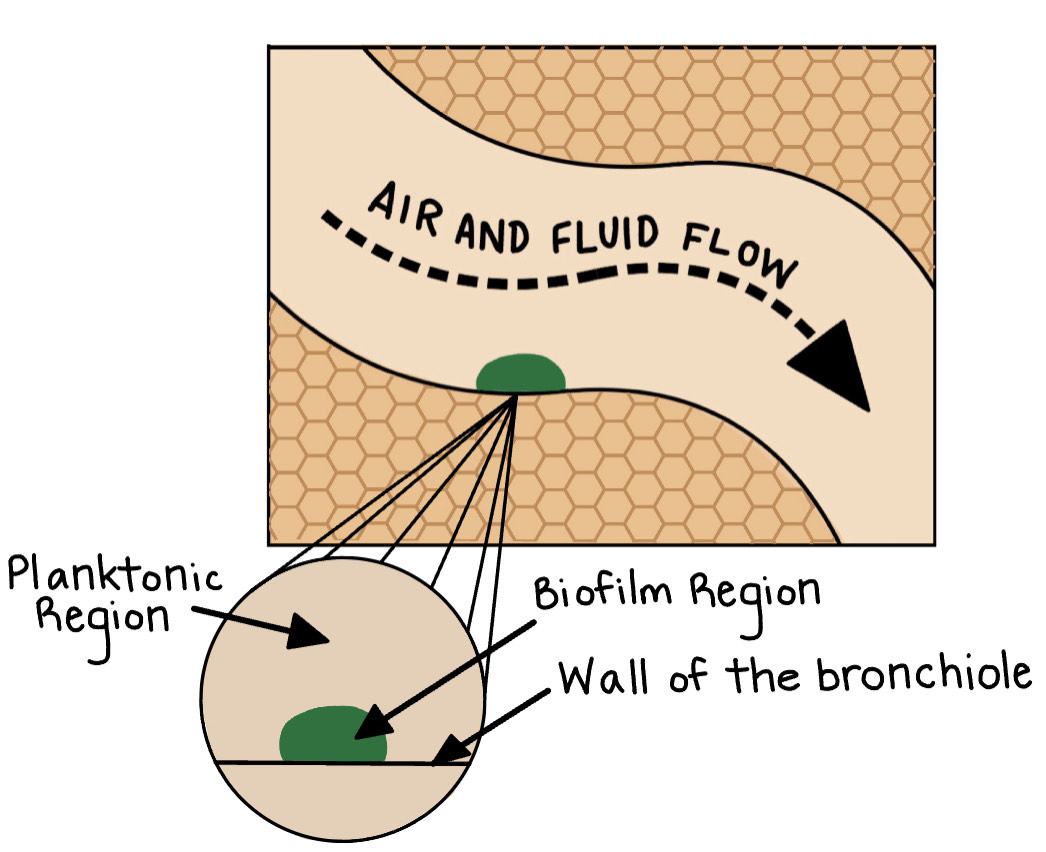}
            \caption{{\bf Schematic of Biofilm in Bronchiole} This is a schematic representation of the 
            biofilm that formed on the wall of the bronchiole, showing the biofilm region and the 
            planktonic region}
            \label{fig:schematicbiofilm}
        \end{figure}

    \subsection{\emph{Deterministic Model - ODE}}\label{mathmodel}
        The model describing the Biofilm-Phage interactions is formulated as a deterministic model of as 
        system of six ordinary differential equations. The dependent variables are $n$, $B$, $P$, $V_B$, 
        $V_P$ and $I$. The variable $n$ denotes the concentration of the bacteria growth limiting nutrient 
        substrate, $B$ denotes the bacteria cells in the biofilm region while $P$ denotes the  bacterial cells 
        in the planktonic region, $V_B$ and $V_P$ denotes the viral load of bacteriophages in the biofilm 
        and the planktonic regions respectively; $I$ is the concentration of all the infected bacteria cells 
        from the planktonic and biofilm regions. The model captures the detachment of bacteria cells from the 
        biofilm induced by bacteriophages, and reattachment of bacteria to the biofilm, this meshes well with 
        the life cycle of biofilms. The governing equations read

        \begin{eqnarray}
        	\diff{d}{t}n&=&f(n)-\left(\lambda_B B+\lambda_P P\right)\frac{n}{n+k}
        	\label{eq:nutrients} \\
        	\diff{d}{t}B&=&\lambda_BB\frac{n}{n+k}-\phi_1BV_B
        	-\gamma_1\frac{V_B}{\zeta_1+V_B}B+\gamma_2\frac{B}{\zeta_2+B}P-\mu_BB
        	\label{eq:cellbiofilm} \\
        	\diff{d}{t}P&=&\lambda_PP\frac{n}{n+k}-\phi_2PV_P
        	+\gamma_1\frac{V_B}{\zeta_1+V_B}B-\gamma_2\frac{B}{\zeta_2+B}P-\mu_PP
        	\label{eq:cellplate}  \\
        	\diff{d}{t}V_B&=&\beta\phi_1V_BB-c_1V_B-qV_B+pV_P\label{eq:phagebiofilm}  \\
        	\diff{d}{t}V_P&=&\beta\phi_2V_PP-c_2V_P+qV_B-pV_P\label{eq:phageplate} \\
        	\diff{d}{t}I&=&\phi_1V_BB+\phi_2V_PP-\frac{10}{\tau}\frac{n}{n+k}I-\alpha I
        	\label{eq:infected}
        \end{eqnarray}
        \begin{figure}[!ht]
            \centering
            \includegraphics[scale=0.75]{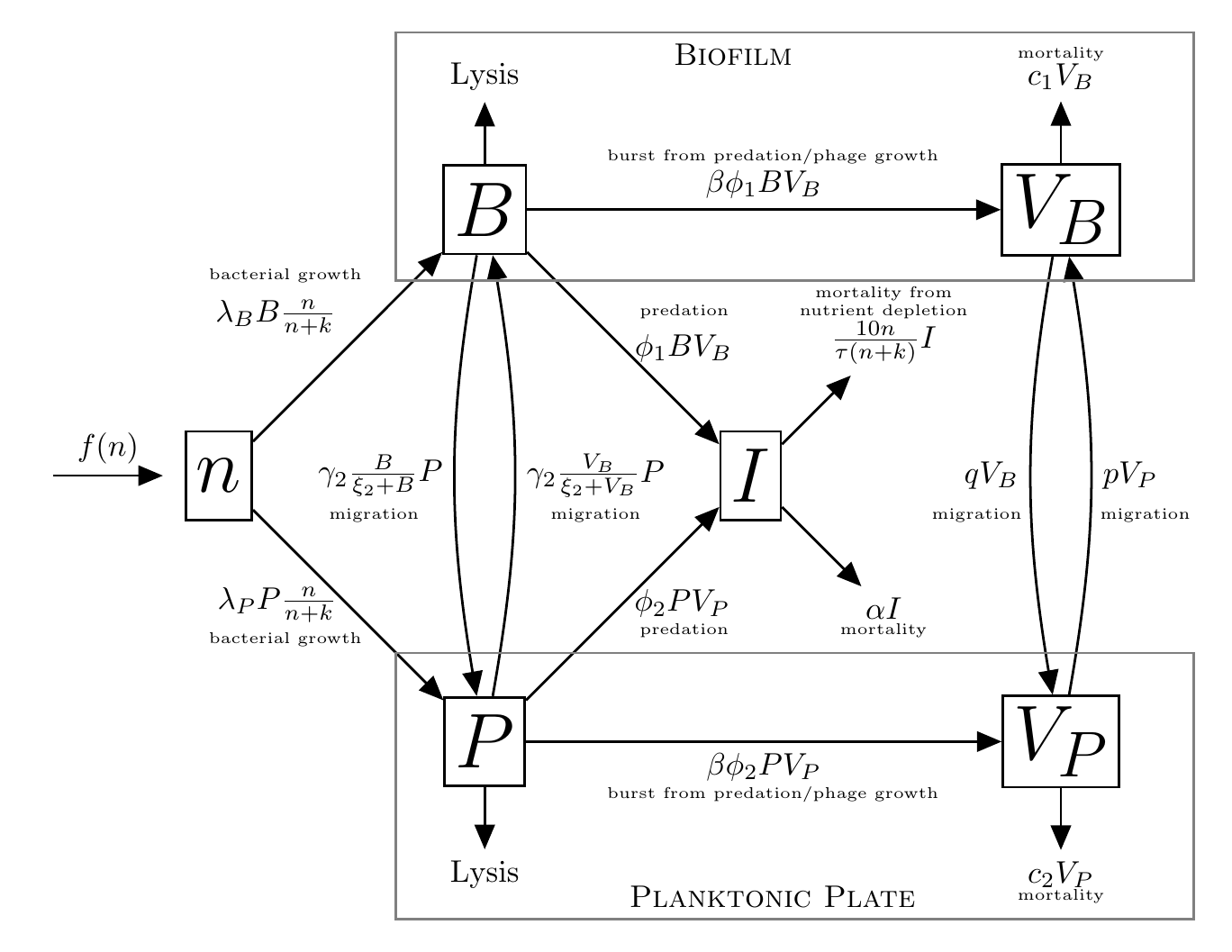}
            \caption{{\bf Flow diagram of the deterministic ODE model} This is a schematic representation 
            of the deterministic ODE model, showing the flows and connections with the parameters. These 
            parameters are also presented in \autoref{table1} with the actual descriptions of the parameter, 
            the values and sources. Because of lack of better word, we have used `migration' to capture the 
            conversion from biofilm to planktonic region, this does not imply any spatial component. }
            \label{fig2}
        \end{figure}

        Equation \eqref{eq:cellbiofilm} and \eqref{eq:cellplate} describe the bacterial growth in the 
        planktonic  and biofilm regions, the equations also describe predation of bacteria by the phages 
        and biofilm cell detachment which forms a direct coupling of the system. Equations 
        \eqref{eq:phagebiofilm} and \eqref{eq:phageplate} describes the phage growth in the biofilm and 
        planktonic region, while equation \eqref{eq:infected} keeps track of all infected bacterial cells in 
        both the biofilm and planktonic regions. Equation \eqref{eq:nutrients} describes the consumption of 
        nutrient by $B$ and $P$ respectively. The flow diagram of the model is presented in \autoref{fig2}. 
        The parameters on this model are presented in \autoref{table1}, one of the parameter of interest is 
        the burst size which determines the average number of phage release per bacterium; and could vary 
        from $10$ to $100$ for DNA transducing bacteriophages to about $20,000$ pfu for the RNA viruses. 

        \begin{table}[!ht]
            \centering
            \caption{
                {\bf Table of parameters}}
            \begin{tabular}{|l|l|l|l|l|}
                \hline
                \multicolumn{5}{|l|}{\bf Table of Parameters}\\ \thickhline
                \textbf{Symbol}& \textbf{Description} & \textbf{Value} & \textbf{Units} & \textbf{Source}  \\ 
                \hline 
                $n_{0}$ & Initial Nutrient concentration & $\text{variable}$ & $[gm^{-3}]$ & 
                \cite{refr19, refr15} \\ \hline
                $B_{0}$ & Initial biofilm bacteria & $\text{variable}$ & $[gm^{-3}]$ &  \cite{refr19, refr15} \\ 
                \hline
                $P_{0}$ & Initial planktonic bacteria & $\text{variable}$ & $[gm^{-3}]$ &  \cite{refr19}  \\ 
                \hline
                $V_{b0}$ & Initial biofilm phages & $\text{variable}$ & $[gm^{-3}]$ &   \text{Assumed} \\ 
                \hline
                $V_{p0}$ & Initial planktonic phages & $\text{variable}$ & $[gm^{-3}]$ & \cite{refr19} \\ 
                \hline
                $I_{0}$ & Initial Infected cells & $\text{variable}$ & $[gm^{-3}]$ &  \cite{refr19}  \\ 
                \hline
                $p$ & Phage detachement rate  & $0.1$ & $[d^{-1}]$ & \text{Assumed}  \\ 
                \hline
                $q$ & Phage reattach rate & $0.5$ & $[d^{-1}]$ & \text{Assumed} \\  
                \hline
                $\lambda_B$ & biofilm bacteria growth rate & $6.0$ & $[d^{-1}]$ &   \cite{refr19} \\ 
                \hline
                $\lambda_P$ & Planktonic bacteria growth rate  & $6.0$ & $[d^{-1}]$ &   \cite{refr19} \\ 
                \hline
                $\tau$ & Average latency time  & $0.5$ & $[h]$ &  \cite{refr19, refr20} \\ 
                \hline
                $k$ & Monod constant &$ 4.0$ & $[gm^{-3}]$ &    \cite{refr19} \\ 
                \hline
                $\alpha$ & Infection decay rate & $0.2$  & $[d^{-1}]$ & \text{Assumed}  \\ 
                \hline
                $\beta$ & Burst size  & $100$  & $[-]$ &    \cite{refr19} \\ 
                \hline
                $\gamma_1$ & Phage induced detach rate & $0.6$ & $[d^{-1}]$ &   \cite{refr15} \\ 
                \hline
                $\gamma_2$ & Natural detach rate & $0.3$   & $[d^{-1}]$ &   \cite{refr15} \\ \hline
                $\phi_1$ & Predation rate in biofilm & $10^{-8}$ &  $[m^3g^{-1}d^{-1}]$ & \text{Assumed}  \\ 
                \hline
                $\phi_2$ & Predation rate in planktonic & $10^{-6}$   & $[m^3g^{-1}d^{-1}]$ & \text{Assumed} \\ 
                \hline
                $\zeta_1$ & Monod saturation & $10^2$ & $[gm^{-3}]$ & \text{Assumed}  \\ 
                \hline
                $\zeta_2$ & Monod saturation & $10^4$ & $[gm^{-3}]$ & \text{Assumed} \\ \hline
                $c_1$ & Phage mortality rate in biofilm &   $2.1$ & $[d^{-1}]$ & \text{Assumed}    \\ 
                \hline 
                $c_2$ & Phage mortality rate in planktonic & $2.1$ & $[d^{-1}]$ & \text{Assumed}  \\ 
                \hline 
                $\mu_B$ & Bacteria mortality rate in Biofilm & $0.1$ & $[d^{-1}]$ & \text{Assumed} \\ 
                \hline
                $\mu_P$ & Bacteria mortality rate in planktonic & $0.1$ & $[d^{-1}]$ & \text{Assumed}  \\ 
                \hline
            \end{tabular}
            \label{table1}
        \end{table}
\newpage
    \subsubsection{\emph{Basic Reproduction Number}}
        The system \eqref{eq:nutrients}-\eqref{eq:infected} represents a nonlinear system of ODEs with the 
        interaction of phages and bacteria. 
        \begin{remark}
            The function $f(n)$, for the input/increase of nutrients, holds that 
            \begin{enumerate}
                \item $f(0)=0$ (thus making the origin an steady state) and $f'(0)>0$.
                \item For some fixed value $x>0$, the conditions $f(x)=0$ and $f'(x)<0$ are held. 
            \end{enumerate}
        \end{remark}
        
        Considering the above properties, we compute the Jacobian matrix which is given by:
        \begin{scriptsize}\beq
        J(X,\mathcal{P})= \left(\scalemath{0.65}{
        \barr{cccccc}
        	f'(n)-\frac{k(\lambda_BB+\lambda_PP)}{(n+k)^2} & -\frac{n\lambda_B}{n+k} & 
        	-\frac{n\lambda_P}{n+k} & 0 & 0 & 0 \\
        	\frac{kB\lambda_B}{(n+k)^2} & \dag &\frac{B\gamma_2}{\zeta_2+B} & 
        	-\phi_1B-\frac{\zeta_1\gamma_1B}{(\zeta_1+V_B)^2}&0&0\\
        	\frac{\lambda_PPk}{(n+k)^2}&\frac{\gamma_1V_B}{\zeta_1+V_B}-
        	\frac{\gamma_2\zeta_2P}{(\zeta_2+B)^2}&
        	\ddag&\frac{\gamma_1\zeta_1B}{(\zeta_1+V_B)^2}&-\phi_2P&0\\
        	0&\beta\phi_1V_B&0&\beta\phi_1B-c_1-q&p&0\\
        	0&0&\beta\phi_2V_P&q&\beta\phi_2P-p-c_2&0\\
        	-\frac{10kI}{\tau(n+k)^2}&\phi_1V_B&\phi_2V_P&\phi_1B&\phi_2P&
        	-\frac{10n}{\tau(n+k)}-\alpha
        \earr } \right) 
        \label{eq:JacobMatrix}
        \eeq\end{scriptsize}
        where 
        \bes\begin{split}
            \dag&=\frac{n\lambda_B}{n+k}-\phi_1V_B-\frac{\gamma_1V_B}{\zeta_1+V_B}
            +\frac{\gamma_2\zeta_2P}{(\zeta_2+B)^2}-\mu_B\\
            \ddag&=\frac{\lambda_Pn}{n+k}-\phi_2V_P-\frac{\gamma_2B}{\zeta_2+B}-\mu_P
        \end{split}\ees

        The equilibria points are determined by the zeros of the system 
        \eqref{eq:nutrients}-\eqref{eq:infected}. Considering the disease free equilibria which are 
        mainly determined in the absence of pathogen scenario, that is, when neither phages and infected 
        bacterial cells are present. From this context, we let $I=0$, $V_P=0$, and $V_B=0$; notice this 
        condition causes for \eqref{eq:phagebiofilm},\eqref{eq:phageplate}, and \eqref{eq:infected}
        to equal zero, thus reducing the equilibria problem to find the zeros of the system 
        \beq\begin{split}
	        \diff{d}{t}n&=f(n)-\left(\lambda_B B+\lambda_P P\right)\frac{n}{n+k}\\
	        \diff{d}{t}B&=B\left(\frac{\lambda_Bn}{n+k}+\frac{\gamma_2P}{\zeta_2+B}
	        -\mu_B\right)\\
	        \diff{d}{t}P&=P\left(\frac{\lambda_Pn}{n+k}-\frac{\gamma_2B}{\zeta_2+B}
	        -\mu_P\right)
        	\label{eq:d-fsystem}
        \end{split}\eeq
        Through a quick inspection we have that the equilibria points of \eqref{eq:d-fsystem} are 
        \begin{eqnarray}
	        DFE_1&=&\left(\scalemath{1}{n,0,0,0,0,0}\right)\label{eq:dfe1} ; \qquad\qquad 
	        \mbox{where $n$ is a zero of the function $f$} \nonumber \\
	        DFE_2&=&\left(\scalemath{0.85}{\frac{k\mu_B}{\lambda_B-\mu_B},
	        \frac{f\left(\frac{k\mu_B}{\lambda_B-\mu_B}\right)
	        \left(\frac{k\mu_B}{\lambda_B-\mu_B}+k\right)}{\lambda_B\frac{k\mu_B}{\lambda_B-\mu_B}},0,0,0,0}
	        \right)=\left(n_B^\ast,\frac{f(n_B^\ast)}{\mu_B},0,0,0,0\right)\label{eq:dfe2} \nonumber \\
	        DFE_3&=&\left(\scalemath{0.75}{\frac{k\mu_P}{\lambda_P-\mu_P},
	        0,\frac{f\left(\frac{k\mu_P}{\lambda_P-\mu_P}\right)\left(\frac{k\mu_P}{\lambda_P-\mu_P}+k\right)}
	        {\lambda_P\frac{k\mu_P}{\lambda_P-\mu_P}},0,0,0}\right)=\left(n_P^\ast,0,\frac{f(n_P^\ast)}
	        {\mu_P},0,0,0\right) \nonumber \label{eq:dfe3}
        \end{eqnarray}
        
        By evaluating \eqref{eq:JacobMatrix} at \eqref{eq:dfe1} we have that
        \beq
        	J_{0,1}=\left(\barr{cccccc}
            	f'(n)&-\frac{\lambda_Bn}{n+k}&-\frac{\lambda_Pn}{n+k}&0&0&0\\
            	0&\frac{\lambda_Bn}{n+k}-\mu_B&0&0&0&0\\
            	0&0&\frac{\lambda_Pn}{n+k}-\mu_P&0&0&0\\
            	0&0&0&-c_1-q&p&0\\
            	0&0&0&q&-c_2-p&0\\
            	0&0&0&0&0&-\frac{10n}{\tau(n+k)}-\alpha
        	\earr\right)\label{eq:dfe1jacobmat}
        \eeq
        which can be rewritten as the difference of two positive matrices $\mathcal{F}_1$ and 
        $\mathcal{V}_1$, thus we apply the next generation matrix. The basic reproductive number is 
        therefore given by the spectral radius of $\mathcal{F}_1\mathcal{V}_1^{-1}$, which is the eigenvalue 
        of largest magnitude, thus 
        \beq
            R_{0,1}=\max\left\{\frac{\lambda_Bn}{\mu_B(n+k)},\frac{\lambda_Pn}{\mu_P(n+k)},
            \pm\sqrt{\frac{pq}{(c_1+q)(c_2+p)}}\right\}\label{eq:dfe1R0}
        \eeq

        \begin{theorem}
            The Disease-Free equilibrium 1 (DFE1) is asymptotically stable if $n\neq0$ and $R_{0,1}<1$
        \end{theorem}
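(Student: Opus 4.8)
The plan is to compute the spectrum of the Jacobian $J_{0,1}$ in \eqref{eq:dfe1jacobmat} evaluated at $DFE_1$ and then to invoke the principle of linearized stability: it is enough to show that, under the hypotheses $n\neq 0$ and $R_{0,1}<1$, every eigenvalue of $J_{0,1}$ has strictly negative real part. The key structural observation is that $J_{0,1}$ is block diagonal --- a $3\times 3$ block in the variables $(n,B,P)$, a $2\times 2$ block coupling $V_B$ and $V_P$ with diagonal entries $-c_1-q$, $-c_2-p$ and off-diagonal entries $p$, $q$, and the $1\times 1$ block $-\frac{10n}{\tau(n+k)}-\alpha$ --- so its spectrum is the union of the spectra of the three blocks, and each can be handled separately.

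I would then analyse the blocks one at a time. The $(n,B,P)$ block is upper triangular, so its eigenvalues are $f'(n)$, $\frac{\lambda_B n}{n+k}-\mu_B$ and $\frac{\lambda_P n}{n+k}-\mu_P$; here the hypothesis $n\neq 0$ is exactly what makes the second item of the Remark applicable, giving $f'(n)<0$, while the remaining two eigenvalues are negative precisely when $\frac{\lambda_B n}{\mu_B(n+k)}<1$ and $\frac{\lambda_P n}{\mu_P(n+k)}<1$, both of which are implied by $R_{0,1}<1$ through the definition \eqref{eq:dfe1R0}. For the $2\times 2$ phage block the characteristic polynomial is $\mu^2+(c_1+c_2+p+q)\mu+(c_1+q)(c_2+p)-pq$: the coefficient of $\mu$ is positive since all parameters are positive, and the constant term is positive if and only if $\frac{pq}{(c_1+q)(c_2+p)}<1$, i.e. if and only if $\sqrt{\frac{pq}{(c_1+q)(c_2+p)}}<1$, which is again subsumed by $R_{0,1}<1$; the Routh--Hurwitz test for a quadratic then places both roots in the open left half-plane. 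Finally the scalar block equals $-\frac{10n}{\tau(n+k)}-\alpha<0$ because $\alpha,\tau,k>0$ and $n\ge 0$. Assembling the three cases shows that every eigenvalue of $J_{0,1}$ has negative real part, whence $DFE_1$ is locally asymptotically stable by Lyapunov's indirect method.

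The computation is largely mechanical once the block-diagonal structure is recognised; the steps that genuinely use the hypotheses are the appeal to the Remark to secure $f'(n)<0$ (which is why $n\neq 0$ is required --- at $n=0$ one has $f'(0)>0$ and $DFE_1$ is unstable) and the identification of the Routh--Hurwitz positivity condition $(c_1+q)(c_2+p)>pq$ for the phage block with the square-root term appearing in $R_{0,1}$. I would also point out that linearization delivers only \emph{local} asymptotic stability; upgrading this to a global statement, if desired, would require a separate Lyapunov-function or monotone-dynamics argument, which I do not pursue here.
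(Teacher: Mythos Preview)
Your proof is correct and follows essentially the same approach as the paper: both compute the characteristic polynomial of $J_{0,1}$ (the paper by direct expansion, you by first isolating the block-diagonal structure) and verify that all roots lie in the open left half-plane under the hypotheses $n\neq 0$ and $R_{0,1}<1$. Your explicit use of Routh--Hurwitz for the $2\times 2$ phage block and your remark that the conclusion is only \emph{local} asymptotic stability are cleaner than the paper's somewhat informal statement of the quadratic condition, but the substance is the same.
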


        \begin{proof}
        
            Suppose $n=0$, a zero of $f$, considering Remark 1, in this case the system naturally becomes 
            unstable since \eqref{eq:dfe1jacobmat} has positive eigenvalues, so it suffices that $n>0$ be a 
            zero of $f$.\\
            
            \noindent Remains to show $R_{0,1}<1$ for stability:  For this, the overall stability of 
            $DFE1=(n,0,0,0,0,0)$ is given by the roots of the characteristic polynomial of 
            \eqref{eq:dfe1jacobmat} which is
            \vspace{.1in}
            
            \noindent$det(\lambda I-J_{0,1})=$
            
            \vspace{.2in}
            	
            \noindent $det\left(\barr{cccccc}
            	\lambda-f'(n)&\frac{\lambda_Bn}{n+k}&\frac{\lambda_Pn}{n+k}&0&0&0\\
            	0&\lambda+\mu_B-\frac{\lambda_Bn}{n+k}&0&0&0&0\\
            	0&0&\lambda-\frac{\lambda_Pn}{n+k}+\mu_P&0&0&0\\
            	0&0&0&\lambda+c_1+q&-p&0\\
            	0&0&0&-q&\lambda+c_2+p&0\\
            	0&0&0&0&0&\lambda+\frac{10n}{\tau(n+k)}+\alpha
            	\earr\right) =$
            \vspace{.2in}
            	
            \noindent$(\lambda-f'(n))\left(\lambda+\mu_B-\frac{\lambda_Bn}{n+k}\right)\left(
            \lambda-\frac{\lambda_Pn}{n+k}+\mu_P\right)
            \left(\lambda+\frac{10n}{\tau(n+k)}+\alpha\right)\left(\left[\lambda+c_1+q\right]
            \left[\lambda+c_2+p\right]-pq\right)$
            
            \vspace{.2in}
            	
            \noindent whose roots are all negative if the following holds:
            \bes
                \mu_B>\frac{\lambda_Bn}{n+k}\,,\quad \frac{\lambda_Pn}{n+k}<\mu_P\,,\quad
                \left[\lambda+c_1+q\right]\left[\lambda+c_2+p\right]<pq
            \ees
            Notice that when $R_{0,1}<1$, the conditions above hold. Hence $DFE1$ is asymptotically stable.
        \end{proof}

        \vspace{.1in}

        \noindent Similarly, for the second disease free equilibrium (DFE2) we have that 

        \vspace{.1in}

        \noindent $J_{0,2}= \left(\barr{cccccc}
	        f'(n_B^\ast)- k^\star & - q^\star\lambda_B  & -q^\star\lambda_P & 0 & 0 & 0\\
        	k^\star & q^\star\lambda_B -\mu_B & 
        	s^\star &
        	-t^\star\phi_1  -\frac{\gamma_1t^\star }{\zeta_1}&0&0\\
        	0&0& q^\star\lambda_P  -s^\star \mu_P &\frac{\gamma_1t^\star}{\zeta_1}&0&0\\
        	0&0&0&\beta t^\star \phi_1-c_1-q&p&0\\
        	0&0&0&q&-p-c_2&0\\
        	0&0&0&t^\star \phi_1&0&
        	-\frac{10q^\star}{\tau}-\alpha
        \earr\right) $

        \vspace{.1in}

        $k^\star = \frac{k\lambda_Bf(n_B^\ast)}{\mu_B(n_B^\ast+k)^2}$, 
        $ q^\star = \frac{n_B^\ast}{n_B^\ast+k}$, 
        $s^\star = \frac{f(n_B^\ast)\gamma_2}{\mu_B\zeta_2+f(n_B^\ast)} $, 
        $t^\star = \frac{f(n_B^\ast)}{\mu_B} $

        \vspace{.1in}

        \noindent which can be rewritten as the difference of two positive matrices $\mathcal{F}_1$ and 
        $\mathcal{V}_1$, thus we apply the next generation matrix. The basic reproductive number is 
        therefore given by the spectral radius of $\mathcal{F}_1\mathcal{V}_1^{-1}$ which is the eigenvalue 
        of largest magnitude, thus 

        \begin{tiny}
            \beq
                R_{0,2}=\max\left\{\frac{k\lambda_B f(n_B^*)}{ \lvert f^\prime(n_B^*) \rvert 
                \mu_B(n_B^* + k)^2 + k\lambda_Bf(n_B^*)},\frac{(\mu_B\xi_2 + 
                f(n_B^*))\lambda_P\eta_B^*}{(n_B^*+k)\{\gamma_2f(n_B^*)+\mu_P(\mu_B\xi_2 + f(n_B^*))\}},
                \frac{\phi_1f(n_B^*)}{\mu_B(c_1+q)}
                \right\}\label{eq:dfe2R0}
            \eeq
        \end{tiny}

        \begin{theorem}
            The Disease-Free equilibrium 2 (DFE2) is asymptotically stable if $n_B^\ast\neq0$, 
            $f^\prime(n_B^\ast) \gg 0$ and $R_{0,2}<1$
        \end{theorem}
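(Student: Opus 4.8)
The plan is to mirror the proof of Theorem~1. As there, $DFE2$ is (locally) asymptotically stable once every eigenvalue of the linearization $J_{0,2}$ has negative real part, so it suffices to locate the roots of $\det(\lambda I-J_{0,2})$ and show they all lie in the open left half-plane under the hypotheses $n_B^\ast\neq0$, $f'(n_B^\ast)\gg0$ and $R_{0,2}<1$. First I would dispose of the degenerate case just as in Theorem~1: if $n_B^\ast=0$ then, since $f(0)=0$, the quantities $k^\star,q^\star,s^\star,t^\star$ all vanish and $J_{0,2}$ is triangular with the positive eigenvalue $f'(0)>0$ (Remark~1) present, so $DFE2$ is unstable; hence the hypothesis $n_B^\ast\neq0$ --- which, in view of $n_B^\ast=\frac{k\mu_B}{\lambda_B-\mu_B}$, is the same as $\lambda_B>\mu_B$ and is exactly what makes the $B$--coordinate $f(n_B^\ast)/\mu_B$ of $DFE2$ positive --- is genuinely needed.

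The computational heart is the observation that, in its displayed form, $J_{0,2}$ has rows $4$ and $5$ supported only on columns $4,5$ and row $6$ supported only on columns $4$ and $6$; hence, reordering the six variables as $(V_B,V_P,n,B,P,I)$, the matrix becomes block lower triangular, and
\bes
\det(\lambda I-J_{0,2})=\left(\lambda+\frac{10q^\star}{\tau}+\alpha\right)\, q_{\mathrm{ph}}(\lambda)\, q_{nBP}(\lambda),
\ees
the product of the linear factor from the $I$--equation (root $-\frac{10q^\star}{\tau}-\alpha<0$, automatically stable since $q^\star=\mu_B/\lambda_B>0$), the characteristic polynomial $q_{\mathrm{ph}}$ of the $2\times2$ phage block $\left(\barr{cc}\beta t^\star\phi_1-c_1-q & p\\ q & -p-c_2\earr\right)$, and the characteristic polynomial $q_{nBP}$ of the $3\times3$ block in $(n,B,P)$ --- the exact analogue of the factorization used in the proof of Theorem~1. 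I would then further reduce $q_{nBP}$: the third row of that block, $(0,0,q^\star\lambda_P-s^\star\mu_P)$, peels off the root $q^\star\lambda_P-s^\star\mu_P$, and for the remaining $(n,B)$ block I would invoke the identity $q^\star\lambda_B=\mu_B$ --- which follows from $n_B^\ast+k=\frac{k\lambda_B}{\lambda_B-\mu_B}$, i.e.\ $q^\star=\mu_B/\lambda_B$ --- to kill the $(B,B)$ entry, leaving
\bes
\left(\barr{cc} f'(n_B^\ast)-k^\star & -\mu_B\\ k^\star & 0\earr\right),\qquad k^\star=\frac{k\lambda_B f(n_B^\ast)}{\mu_B(n_B^\ast+k)^2}>0.
\ees

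With these reductions the stability of $J_{0,2}$ collapses to three elementary sign checks, which I would carry out with the Routh--Hurwitz (trace/determinant) criterion on each factor. The reduced $(n,B)$ block has determinant $\mu_B k^\star>0$ and trace $f'(n_B^\ast)-k^\star$, hence is Hurwitz iff $f'(n_B^\ast)<k^\star$; the isolated root $q^\star\lambda_P-s^\star\mu_P$ is negative iff $q^\star\lambda_P<s^\star\mu_P$; and the phage block is Hurwitz iff its trace $\beta t^\star\phi_1-(c_1+c_2+p+q)$ is negative and its determinant $(c_1+q-\beta t^\star\phi_1)(c_2+p)-pq$ is positive. The final step is to show that these three conditions are exactly what the three entries of $R_{0,2}$ being $<1$ encode --- via the next-generation splitting $J_{0,2}=\mathcal F_1-\mathcal V_1$ already invoked before the statement, with the hypotheses $n_B^\ast\neq0$ and $f'(n_B^\ast)\gg0$ controlling the first entry through $|f'(n_B^\ast)|$ --- after which every factor of $\det(\lambda I-J_{0,2})$ is Hurwitz and $DFE2$ is asymptotically stable.

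I expect the main obstacle to be the bookkeeping in that last matching step rather than anything conceptual: converting the clean Routh--Hurwitz inequalities into the heavier rational expressions displayed in $R_{0,2}$, and in particular pinning down the precise role of the hypothesis on $f'(n_B^\ast)$ --- one has to verify that $f'(n_B^\ast)\gg0$ (which enters as $|f'(n_B^\ast)|$ in the first entry of $R_{0,2}$), together with $R_{0,2}<1$, indeed forces $f'(n_B^\ast)<k^\star$, since the sign of the trace of the $(n,B)$ block is what ultimately decides that part of the spectrum. A secondary check is to confirm that the couplings discarded by the triangular reordering --- the entries $-t^\star\phi_1-\gamma_1t^\star/\zeta_1$ and $\gamma_1t^\star/\zeta_1$ sitting in the $V_B$--column of rows $2,3$, and $t^\star\phi_1$ in row $6$ --- really do not feed back, so that the factorization of the characteristic polynomial is exact and no additional coupling can destabilize $DFE2$.
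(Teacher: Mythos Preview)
Your approach is essentially the same as the paper's: both dispose of the degenerate $n_B^\ast=0$ case first, then factor the characteristic polynomial of $J_{0,2}$ via its block structure and check that each factor is Hurwitz under $R_{0,2}<1$. Your explicit block-lower-triangular reordering and the use of the identity $q^\star\lambda_B=\mu_B$ (which collapses the $(n,B)$ block to trace $f'(n_B^\ast)-k^\star$ and determinant $\mu_Bk^\star$) make the factorization and the resulting sign conditions considerably cleaner than in the paper, which lists the equivalent inequalities without simplification; the matching-to-$R_{0,2}$ bookkeeping you flag as the main obstacle is indeed where both arguments are most delicate.
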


        \begin{proof}
            Suppose $n=0$, a zero of $f$, considering Remark 1, in this case the system naturally 
            becomes unstable since the DFE2 Jacobian matrix has positive eigenvalues, so it suffices 
            that $n>0$ be a zero of $f$.  \\
            
            \noindent Remains to show $R_{0,2}<1$ for stability:  For this, the overall stability of 
            $DFE2=\left(n_B^\ast,\frac{f(n_B^\ast)}{\mu_B},0,0,0,0\right)$ is given by the roots of 
            the characteristic polynomial of DFE2 Jacobian matrix which is
            \vspace{.1in}
        
            \noindent $det(\lambda I-J_{0,2})=$
        
            \vspace{.2in}
        
            \begin{tiny}
                \noindent $det \left(\barr{cccccc}
        	    \lambda-f'(n_B^\ast)+ k^\star &  q^\star\lambda_B  & q^\star\lambda_P & 0 & 0 & 0\\
            	-k^\star & \lambda-q^\star\lambda_B +\mu_B & 
            	-s^\star &
            	t^\star\phi_1  + \frac{\gamma_1t^\star }{\zeta_1}&0&0\\
            	0&0& \lambda-q^\star\lambda_P + s^\star \mu_P &\frac{\gamma_1t^\star}{\zeta_1}&0&0\\
            	0&0&0& \lambda- \beta t^\star \phi_1+c_1+q&p&0\\
            	0&0&0&q& \lambda+p+c_2&0\\
            	0&0&0& -t^\star \phi_1&0&
            	\lambda+\frac{10q^\star}{\tau}+\alpha\earr\right) $
            \end{tiny}
        
            \vspace{.2in}
        
            \noindent where, 
            
            \vspace{.1in}
            
            $k^\star = \frac{k\lambda_Bf(n_B^\ast)}{\mu_B(n_B^\ast+k)^2}$, 
            $ q^\star = \frac{n_B^\ast}{n_B^\ast+k}$, 
            $s^\star = \frac{f(n_B^\ast)\gamma_2}{\mu_B\zeta_2+f(n_B^\ast)} $, 
            $t^\star = \frac{f(n_B^\ast)}{\mu_B} $

        
            \noindent whose roots are all negative if the following holds:  
        
            \begin{itemize}
                \item If $f'(n_B^\ast)+\frac{n_B^\ast\lambda_B}{n_B^\ast+k}<
                \frac{k\lambda_Bf(n_B^\ast)}{\mu_B(n_B^\ast+k)^2}+\mu_B$, 
                \begin{itemize}
                    \item since phages are gone, and bacteria is growing, we expect the mortality rate 
                    of bacteria to be negligible when compared with the growth rate.
                \end{itemize}
                \item $f'(n_B^\ast)+\frac{n_B^\ast\lambda_B}{n_B^\ast+k}<
                \frac{k\lambda_Bf(n_B^\ast)}{\mu_B(n_B^\ast+k)^2}+\mu_B$
                \begin{itemize}
                    \item this means that the bacteria growth in the biofilm is greater than the 
                    bacteria death, this sounds right since DFE of phages in the biofilm implies 
                    loosing more phages and having more bacteria
                \end{itemize}
                \item $(c_2+p)c_1+qc_2 >\frac{\beta\phi_1f(n_B^\ast)(c_2+p)}{\mu_P} $
                \begin{itemize}
                    \item this means that the phage loss in the biofilm is greater than phage growth
                \end{itemize}
            \end{itemize}
        
            Notice that when $R_{0,2}<1$, the conditions above hold. Hence $DFE2$ is asymptotically 
            stable. This concludes the proof.
        \end{proof}
        \noindent Finally, for the third disease free equilibrium (DFE3) we have that 

        \vspace{.1in}

        \begin{tiny}
            $J_{0,3}=$
            \bes
            \left(\barr{cccccc}
            	f'(n_P^\ast)-\frac{k\lambda_Pf(n_P^\ast)}{\mu_P(n_P^\ast+k)^2} & 
            	-\frac{n_P^\ast\lambda_B}{n_P^\ast+k} & 
            	-\frac{n_P^\ast\lambda_P}{n_P^\ast+k} & 0 & 0 & 0\\
            	0 & \frac{n_P^\ast\lambda_B}{n_P^\ast+k}+\frac{\gamma_2f(n_P^\ast)}{\mu_P\zeta_2}-\mu_B 
            	& 0 & 0 & 0 & 0 \\
            	\frac{\lambda_Pkf(n_P^\ast)}{\mu_P(n_P^\ast+k)^2}
            	&-\frac{\gamma_2f(n_P^\ast)}{\mu_P\zeta_2}&
            	\frac{\lambda_Pn_P^\ast}{n_P^\ast+k}-\mu_P
            	&0&-\phi_2\frac{f(n_P^\ast)}{\mu_P}&0\\
            	0&0&0&-c_1-q&p&0\\
            	0&0&0&q&\beta\phi_2\frac{f(n_P^\ast)}{\mu_P}-p-c_2&0\\
            	0&0&0&0&\phi_2\frac{f(n_P^\ast)}{\mu_P}&
            	-\frac{10n_P^\ast}{\tau(n_P^\ast+k)}-\alpha
        	\earr\right)
            \ees
        \end{tiny}

        \vspace{.1in}

        \noindent which can be rewritten as the difference of two positive matrices $\mathcal{F}_1$ and 
        $\mathcal{V}_1$, thus we apply the next generation matrix. The basic reproductive number is 
        therefore given by the spectral radius of $\mathcal{F}_1\mathcal{V}_1^{-1}$ which is the 
        eigenvalue of largest magnitude, thus 

        \vspace{.01in}
        \begin{tiny}
            \beq\begin{split}
                R_{0,3}=\max\left\{
                \frac{n_P^\ast\lambda_B}{\mu_B(n_P^\ast+k)}
                +\frac{\gamma_2f(n_P^\ast)}{\mu_P\mu_B\zeta_2},\right.&\left.
                \frac{\lambda_Pn_P^\ast}{\mu_P(n_P^\ast+k)}\left[
                1-\frac{k\lambda_Pf(n_P^\ast)}
                {\lvert f'(n_P^\ast) \rvert \mu_P(n_P^\ast+k)^2+k\lambda_Pf(n_P^\ast)}\right],\right.\\
                &\qquad\left.\frac{\beta\phi_2f(n_P^\ast)}{2\mu_P(c_2+p)}\pm
                \sqrt{\left(\frac{\beta\phi_2f(n_P^\ast)}{2\mu_P(c_2+p)}\right)^2+\frac{pq}{(c_1+q)(c_2+p)}}
                \right\}
                \label{eq:dfe3R0}
            \end{split}\eeq
        \end{tiny}
        \begin{theorem}
            The Disease-Free equilibrium 3 (DFE3) is asymptotically stable if $n\neq0$, 
            $f^\prime(n) \gg 0$ and $R_{0,3}<1$
        \end{theorem}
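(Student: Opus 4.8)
The plan is to mirror the arguments already used for DFE1 and DFE2, i.e.\ to locate the six eigenvalues of the Jacobian at DFE3 and show each has negative real part precisely when $R_{0,3}<1$. First I would dispose of the degenerate case exactly as before: if $n_P^\ast=0$ then the reduced equilibrium collapses to the origin ($P^\ast=f(n_P^\ast)/\mu_P=0$) and, by Remark 1, $f'(0)>0$ appears on the diagonal of the Jacobian, so DFE3 is unstable; hence it suffices to take $n_P^\ast>0$. Recalling $n_P^\ast=k\mu_P/(\lambda_P-\mu_P)$, this forces $\lambda_P>\mu_P$, and for $P^\ast>0$ we also need $f(n_P^\ast)>0$; both facts are used below.

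Next I would compute the characteristic polynomial of $J_{0,3}$ by exploiting its sparsity, just as with $J_{0,1}$ and $J_{0,2}$. Column six of $\lambda I-J_{0,3}$ has a single nonzero entry, which peels off the linear factor $\lambda+\frac{10 n_P^\ast}{\tau(n_P^\ast+k)}+\alpha$; row two likewise has a single nonzero entry, peeling off $\lambda-\frac{n_P^\ast\lambda_B}{n_P^\ast+k}-\frac{\gamma_2 f(n_P^\ast)}{\mu_P\zeta_2}+\mu_B$. After reordering the four remaining coordinates as $(V_B,V_P,n,P)$, the leftover $4\times4$ block is block lower triangular: the upper-left $2\times2$ block is the phage subsystem (the $V_B,V_P$ rows and columns), the lower-right $2\times2$ block is the nutrient--planktonic subsystem (the $n,P$ rows and columns), and the only coupling term $\partial_{V_P}\dot P=-\phi_2 f(n_P^\ast)/\mu_P$ sits in the (eigenvalue-irrelevant) lower-left block. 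Hence
\bes
\det(\lambda I-J_{0,3})=\Big(\lambda+\tfrac{10 n_P^\ast}{\tau(n_P^\ast+k)}+\alpha\Big)\Big(\lambda-\tfrac{n_P^\ast\lambda_B}{n_P^\ast+k}-\tfrac{\gamma_2 f(n_P^\ast)}{\mu_P\zeta_2}+\mu_B\Big)\,Q_1(\lambda)\,Q_2(\lambda),
\ees
where $Q_1$ is the characteristic polynomial of the $2\times2$ phage block of $J_{0,3}$ and $Q_2$ that of its $2\times2$ nutrient--planktonic block, both read directly from the expression for $J_{0,3}$ above.

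Then I would impose negativity of all six roots factor by factor and match the resulting inequalities to \eqref{eq:dfe3R0}. The first linear factor always yields a negative root; the second yields a negative root exactly when $\frac{n_P^\ast\lambda_B}{\mu_B(n_P^\ast+k)}+\frac{\gamma_2 f(n_P^\ast)}{\mu_P\mu_B\zeta_2}<1$, which is the first entry of $R_{0,3}$. For $Q_1$ and $Q_2$ I would use the degree-two Routh--Hurwitz criterion (negative trace and positive determinant of the corresponding $2\times2$ block). For the phage block the key observation is that the radical expression $\frac{\beta\phi_2 f(n_P^\ast)}{2\mu_P(c_2+p)}+\sqrt{\big(\frac{\beta\phi_2 f(n_P^\ast)}{2\mu_P(c_2+p)}\big)^2+\frac{pq}{(c_1+q)(c_2+p)}}$ is precisely the larger root of the upward parabola $x^2-\frac{\beta\phi_2 f(n_P^\ast)}{\mu_P(c_2+p)}x-\frac{pq}{(c_1+q)(c_2+p)}$, so demanding it be $<1$ is the same as demanding that parabola be positive at $x=1$, and a short rearrangement shows this is exactly the determinant condition $c_1c_2+c_1p+c_2q>(c_1+q)\beta\phi_2 f(n_P^\ast)/\mu_P$, from which the trace condition $\beta\phi_2 f(n_P^\ast)/\mu_P<c_1+c_2+p+q$ follows a fortiori. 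For the nutrient--planktonic block I would first use the equilibrium identity $\frac{\lambda_P n_P^\ast}{n_P^\ast+k}=\mu_P$ (valid at DFE3), which annihilates its $(2,2)$ entry: its determinant is then $\frac{\lambda_P k f(n_P^\ast)}{(n_P^\ast+k)^2}>0$ automatically, and its trace $f'(n_P^\ast)-\frac{k\lambda_P f(n_P^\ast)}{\mu_P(n_P^\ast+k)^2}$ is negative precisely when, after substituting $n_P^\ast=k\mu_P/(\lambda_P-\mu_P)$, the second entry of $R_{0,3}$ stays below $1$ --- this is where the hypothesis $f'(n)\gg0$ is used, to fix $|f'(n_P^\ast)|=f'(n_P^\ast)$ and make the trace the binding condition. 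Collecting the three inequalities, each is implied by $R_{0,3}<1$, so every eigenvalue of $J_{0,3}$ has negative real part and DFE3 is asymptotically stable.

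The step I expect to be the main obstacle is this last bookkeeping: verifying carefully that the square-root expression in \eqref{eq:dfe3R0} really is the relevant root of the auxiliary quadratic and that its being below $1$ coincides exactly with determinant-positivity of the phage block, and pinning down the precise role of $f'(n)\gg0$ (equivalently, which branch of $f$ contains $n_P^\ast$) so that the absolute values and signs in $R_{0,3}$ are consistent with the trace condition for the nutrient--planktonic block. Everything else is the same block-triangular determinant algebra already carried out for DFE1 and DFE2.
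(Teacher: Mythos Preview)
Your proposal is correct and takes essentially the same approach as the paper: factor the characteristic polynomial of $J_{0,3}$ via its block-triangular structure, read off Routh--Hurwitz conditions for each factor, and match these to the entries of $R_{0,3}$. The paper splits $\det(\lambda I-J_{0,3})$ as a product of two $3\times3$ determinants (over $(n,B,P)$ and $(V_B,V_P,I)$), whereas you go one step further and peel off the two obvious linear factors first, leaving two $2\times2$ blocks; since the $B$-row and $I$-column are already singletons inside the paper's $3\times3$ blocks, the two factorizations are equivalent. Your use of the equilibrium identity $\lambda_P n_P^\ast/(n_P^\ast+k)=\mu_P$ to kill the $(P,P)$ entry, and your quadratic-root argument showing the radical term in \eqref{eq:dfe3R0} coincides with the determinant condition $c_1c_2+c_1p+c_2q>(c_1+q)\beta\phi_2 f(n_P^\ast)/\mu_P$, are both more explicit than the paper, which simply lists the conditions and asserts they follow from $R_{0,3}<1$. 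The bookkeeping you flag as the main obstacle---reconciling the trace condition on the nutrient--planktonic block with the second entry of $R_{0,3}$ and the hypothesis $f'(n)\gg0$---is indeed the delicate point, and the paper does not resolve it any more carefully than you do.
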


        \begin{proof}
            Suppose $n=0$, a zero of $f$, considering Remark 1, in this case the system naturally becomes 
            unstable since the DFE3 Jacobian matrix has positive eigenvalues, so it suffices that $n>0$ 
            be a zero of $f$.  \\
        
            \noindent Remains to show $R_{0,3}<1$ for stability:  For this, the overall stability of 
            $DFE3=\left(n_P^\ast,0,\frac{f(n_P^\ast)}{\mu_P},0,0,0\right)$ is given by the roots of 
            the characteristic polynomial of DFE3 Jacobian matrix which is
            \vspace{.1in}
            
            \begin{tiny}
                \bes\begin{split}
                    det(\lambda I-J_{0,3})=&\left\lvert\barr{ccc}
            	    \lambda-f'(n_P^\ast)+\frac{k\lambda_Pf(n_P^\ast)}{\mu_P(n_P^\ast+k)^2} & 
            	    \frac{n_P^\ast\lambda_B}{n_P^\ast+k} & \frac{n_P^\ast\lambda_P}{n_P^\ast+k} \\
            	    0 & \lambda-\frac{n_P^\ast\lambda_B}{n_P^\ast+k}
            	    -\frac{\gamma_2f(n_P^\ast)}{\mu_P\zeta_2}+\mu_B & 0\\
            	    -\frac{\lambda_Pkf(n_P^\ast)}{\mu_P(n_P^\ast+k)^2}&
            	    \frac{\gamma_2f(n_P^\ast)}{\mu_P\zeta_2}&
            	    \lambda-\frac{\lambda_Pn_P^\ast}{n_P^\ast+k}+\mu_P
            	    \earr\right\rvert
            	    \times\\&\qquad\qquad
            	    \left\lvert\barr{ccc}
            	    \lambda+c_1+q&-p&0\\
            	    -q&\lambda-\beta\phi_2\frac{f(n_P^\ast)}{\mu_P}+p+c_2&0\\
            	    0&-\phi_2\frac{f(n_P^\ast)}{\mu_P}&
            	    \lambda+\frac{10n_P^\ast}{\tau(n_P^\ast+k)}+\alpha
            	    \earr\right\rvert
            	    \label{eq:dfe1charpoly}
                \end{split}\ees
            \end{tiny}
            whose roots are all negative if the following holds:
            \begin{itemize}
                \item $\displaystyle 
                \frac{n_P^\ast\lambda_B}{n_P^\ast+k}+\frac{\gamma_2f(n_P^\ast)}{\mu_P\zeta_2}<\mu_B$
                \item $f'(n_P^\ast)+\frac{\lambda_Pn_P^\ast}{n_P^\ast+k}<\mu_P+\frac{k\lambda_Pf(n_P^\ast)}
                {\mu_P(n_P^\ast+k)^2}$ and $f'(n_P^\ast)\frac{\lambda_Pn_P^\ast}{n_P^\ast+k}
                +\frac{k\lambda_Pf(n_P^\ast)}{(n_P^\ast+k)^2}>\mu_Pf'(n_P^\ast)$
                \item $\beta\phi_2\frac{f(n_P^\ast)}{\mu_P}<p+c_2+q+c_1$ and 
                $\beta\phi_2\frac{f(n_P^\ast)}{\mu_P}<\frac{pc_1}{c_1+q}+c_2$
            \end{itemize}
            when $R_{0,3}<1$ these conditions hold. Hence $DFE3$ is asymptotically stable, concluding this 
            proof.
            \vspace{.2in}
        \end{proof}

    \subsection{\emph{ODE Model calibration}}
        All the computations were done using Matlab, the parameter values of the model equations are 
        listed in \autoref{table1} which shows the parameter values, units and sources. Due to the 
        novelty of this study, some of the parameter values are assumed based on similar studies such 
        as \cite{refr15}. Parameter sensitivity were performed in the subsequent section to determine 
        which of the parameter values have strong impact on the model generally. We let the program 
        to run for at least $15$ units of time in order to see the model behaviour in its completeness. 

        \subsubsection*{Phage burst size controls Biofilm growth} 
            The first simulation experiments investigates the effect of phage burst size on the 
            biofilm growth. Here, we have considered a situation of steady supply of nutrients to the 
            biofilm, and we have varied the burst size as $10$, $50$, $150$, and $250$. These results 
            are presented in \autoref{fig3}. These simulations reveal that the bacteria cells reduces 
            so quickly as the burst size increases and the biofilm growth balances over time; the 
            biofilm phage population increases significantly as the burst size increases, the phage 
            growth balances after a while. A similar outcome is also seen in the planktonic region. 
            Several studies show that Planktonic cells grow more rapidly than bacteria cells within 
            a biofilm, therefore the phage burst size in the biofilm could be several-fold 
            smaller and the infection cycle takes even longer 
            \cite{refr21,refr22,refr23,refr24,refr25,refr26,refr27, refr28}. Remarkably there is a 
            standard procedure for determination of phage burst size, which is defined as the number of 
            phage progeny produced per infected bacterial cell \cite{refr29, refr30, refr31, refr32}. 
            Phage burst size differ from phage to phage depending on the lysis time.

            \begin{figure}[!ht]
                \includegraphics[scale=0.3]{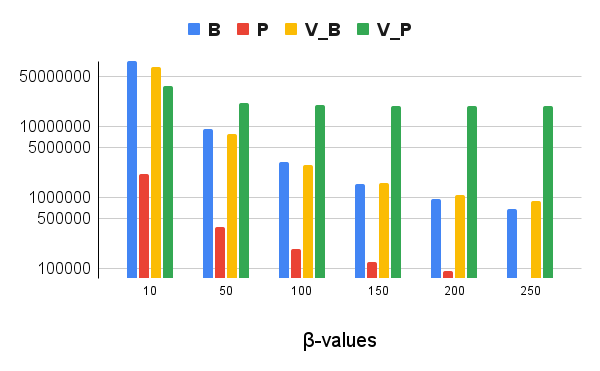} \includegraphics[scale=0.3]{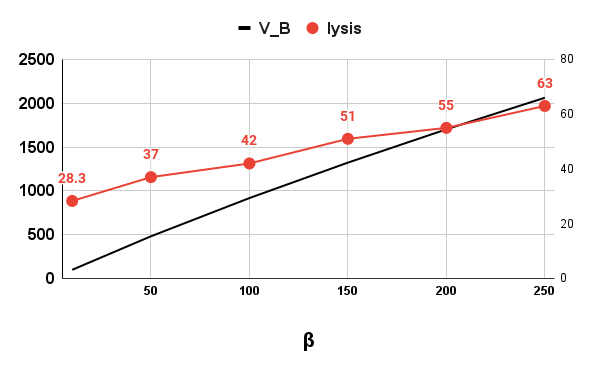}
                \caption{{\bf Variability of Phage Burst Size:} Left - the maximum value of the biofilm 
                cells $B$, planktonic cells $P$, Biofilm phage $V_B$ and the planktonic phage $V_P$, this 
                is plotted for different values of the burst size $\beta$. Right - The minimum possible 
                value of the biofilm phage for different burst size (black solid lines) is compared with 
                the data from \cite{refr33} for different lysis time for comparable burst sizes (red) .}
                \label{fig3}
            \end{figure}
    \subsection*{\emph{Stochastic Model - CTMC}}
        If the bacteria (or phage) population is sufficiently small, an ordinary differential equation 
        model is not appropriate, hence we utilize a continuous-time Markov chain (CTMC) model, which is 
        continuous in time and discrete in the state space in order to study the variability at the 
        initiation of bacteria clearance during phage treatment therapy, peak level of phage infection 
        (in the phage-bacteria interaction, phages are seen as the pathogen and the bacteria are the 
        susceptible). To make it simple, we use the same notation for the state variables as in the 
        ordinary differential equation. The state variables are discrete random variables, 
        $n, B, I, P \in \{0,1,2,3\}$ and $t\in [0, \infty]$
    
        To formulate the CTMC, it is necessary to define the infinitesimal transition probabilities that 
        corresponds to each event in the state variables, this is outlined in \autoref{ctmctable} which 
        consists of $17$ distinct events.
    
        \begin{table}[!ht]
            \centering
            \caption{Table of transitions and corresponding probabilities in stochastic model}
            \begin{tabular}{|p{0.6cm}|p{2.5cm}|c|p{3cm}|p{3cm}|} 
                \hline
                \multicolumn{5}{|l|}{\bf Table of events}\\ \thickhline
                \hline \textbf{Event} & \textbf{Event Description} & \textbf{Transitions} & 
                \textbf{Change} \scriptsize{$\scalemath{0.75}{(\Delta n, \Delta B,\Delta P, \Delta Vb, 
                \Delta Vp, \Delta I)}$} & \textbf{Probability} \\ \hline
                1 & Availability of nutrient & $n \rightarrow n + 1$ & $(1, 0, 0,0,0,0)$ & 
                $f(n)\Delta t +o(\Delta t)$\\ \hline
                2 & Nutrient consumption and bacterial growth & $n \rightarrow n - 1$ & 
                $(-1, 1, 0,0,0,0)$ & $\left(\lambda_B B\right)\frac{n}{n+k}\Delta t +o(\Delta t)$\\  
                & & $B \rightarrow B + 1$  &  & \\ \hline
                3 & Nutrient consumption and bacterial growth & $n \rightarrow n - 1$ & 
                $(-1, 0, 1,0,0,0)$ & $\left(\lambda_P P\right)\frac{n}{n+k}\Delta t +o(\Delta t)$\\  
                & & $P \rightarrow P + 1$  &  & \\  \hline
                4 & Bacteria migration & $B \rightarrow B - 1$ & $(0, -1, +1,0,0,0)$ & 
                $(\gamma_1\frac{V_B}{\zeta_1+V_B}B ) \Delta t +o(\Delta t)$ \\ 
                & & $P \rightarrow P + 1$ & & \\ \hline
                5 & Bacteria migration & $P \rightarrow P - 1$ & $(0, 1, -1,0,0,0)$ & 
                $( \gamma_2\frac{B}{\zeta_2+B}P) \Delta t +o(\Delta t)$ \\ 
                & & $B \rightarrow B + 1$ & & \\ \hline
                6 & Biofilm Bacteria infection by phage & $B \rightarrow B - 1$ & $(0,-1, 0, 0,0,1)$ & 
                $(\phi_1BV_B ) \Delta t +o(\Delta t)$ \\
                & & $I \rightarrow I + 1$ & & \\ \hline
                7 & Planktonic Bacteria infection by phage & $P \rightarrow P - 1$ & $(0,0, -1, 0,0,1)$ & 
                $( \phi_2PV_P) \Delta t +o(\Delta t)$ \\
                & & $I \rightarrow I + 1$ & & \\ \hline
                8 & Biofilm-Phage Migration & $V_B \rightarrow V_B - 1$ & $(0,0,0,-1, 1, 0)$ & 
                $(qV_B) \Delta t +o(\Delta t)$ \\
                & & $V_P \rightarrow V_P + 1$ & &  \\ \hline
                9 & Biofilm-Phage Migration & $V_P \rightarrow V_P - 1$ & $(0,0,0,1, -1, 0)$ & 
                $(pV_P) \Delta t +o(\Delta t)$ \\
                & & $V_P \rightarrow V_P + 1$ & &  \\ \hline
                10 & Biofilm-Phages gain from infected cells & $V_B \rightarrow V_B + 1$ & 
                $(0, 0, 0,1,0,0)$ & $(\beta\phi_1V_BB )\Delta t + o(\Delta t)$ \\\hline
                11 & Planktonic-Phages gain from infected cells & $V_P \rightarrow V_P + 1$ & 
                $(0, 0, 0,0,1,0)$ & $( \beta\phi_2V_PP)\Delta t + o(\Delta t)$ \\\hline
                12 & Death of Biofilm bacteria & $B \rightarrow B - 1$ & $(0, -1, 0,0,0,0)$ & 
                $(\mu_BB) \Delta t +o(\Delta t)$ \\ \hline
                13 & Death of Planktonic bacteria & $P \rightarrow P - 1$ & $(0, 0, -1,0,0,0)$ & 
                $(\mu_PP) \Delta t +o(\Delta t)$\\ \hline
                14 & Biofilm-Phages death & $V_P \rightarrow V_P - 1$ & $(0, 0, 0,-1,0,0)$ & 
                $(c_1V_BB )\Delta t + o(\Delta t)$ \\\hline
                15 & Planktonic-Phages death & $V_B \rightarrow V_B - 1$ & $(0, 0, 0,-1,0,0)$ & 
                $( c_2V_PP)\Delta t + o(\Delta t)$ \\\hline
                16 & Decay of infected cells & $I \rightarrow I - 1$ & $(0, 0, 0,0,0,-1)$ & 
                $(\frac{10n}{\tau(n+k)}) \Delta t + o(\Delta t)$ \\\hline
                17 & Death of infected cells & $I \rightarrow I - 1$ & $(0, 0, 0,0,0,-1)$ & 
                $(\alpha I) \Delta t + o(\Delta t)$ \\\hline
            \end{tabular}
            \label{ctmctable}
        \end{table}
    
        \subsubsection{CTMC Analysis}
            For the continuous time markov chains, we numerically simulate the sample paths in order to 
            determine the peak number of infected bacteria and peak phage-bacteria infection. For the 
            sample paths, we simply compare our results with that of the ordinary differential equations.
        
        \subsubsection*{Sample paths}
            An example of the sample paths that result from the Continuous Time Markov Chain model is 
            shown in \autoref{samplepaths} where the sample paths are captured by the red, blue, green 
            and yellow, whereas the ODE model is captured by the black line. We observe that these sample 
            paths generally aligned with  the population average response that is captured by the ODE model.  
            The sample paths of the CTMC model show the potential variability in timing of the peak level of 
            infection and the peak number of infected bacteria. Due to limitation in computational memory, 
            we reduced the initial values of the dependent variables as shown in \autoref{ctmctable}; in 
            other words, we have used the following initial conditions 
            $n_0=10^5 gm^{-3}, B_0=10^4gm^{-3}, P_0=0, V_{B0}=10^3gm^{-3}, V_{P0}=0,I_0=0$
        

            \begin{figure}[!ht]
                \includegraphics[scale=0.6]{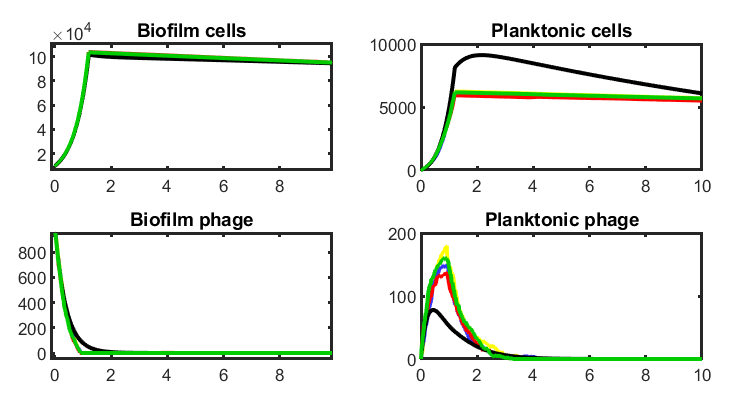}
                \caption{{\bf Sample Paths} This is the sample paths to the CMTC model, due to 
                insufficient memory space, we used a reduced initial conditions for all the dependent 
                variables, that is $1e5, 1e2, 0, 1e3 , 0 , 0$ corresponding to 
                $n_0, B_0 ,P_0, Vb_0 , Vp_0 , I_0$ respectively}
                \label{samplepaths}
            \end{figure}
        
        \subsubsection*{Time to Peak Infection and Peak Number of Infected Biofilm Bacteria}
            We asked whether bacteria infection will reach peak infection in a shorter time, to 
            investigate this, we calculated the mean ($\pm$ SD) of time to peak infection for the 
            bacteria in the biofilm. This is presented in \autoref{fig5} showing that we can attend 
            peak infection with just one phage in the system within a limited time. By introducing 
            few bacteriophages, we observed a large amount of infected bacterial cells resulting from 
            the interaction, this shows that there were a large replication of the viruses. 
            Interestingly, this happened within a short period of time. Even though we do not know 
            how long it might take phage therapy to work, experimental data have shown that treatment 
            of bacteria infection could be achieved in a period as short as $10$ days and up to $8$ 
            weeks \cite{refr38}, this is in agreement with our finding. 
            \begin{figure}[!ht]
                \includegraphics[scale=0.4]{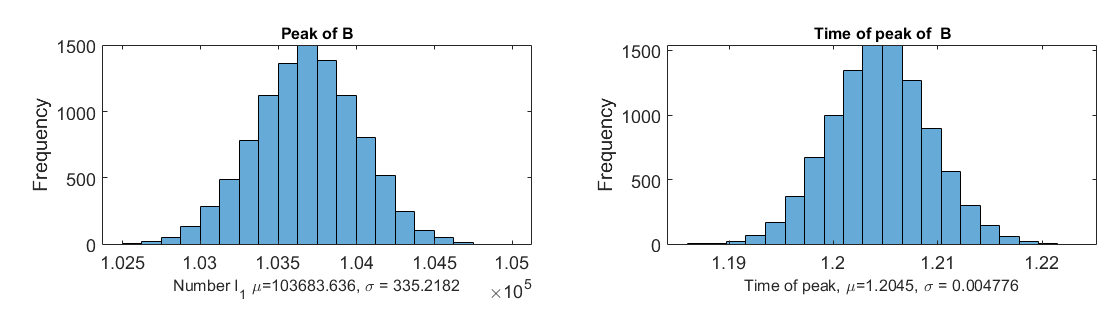}
                \caption{{\bf Peak Infection} This shows the peak infection of bacteria (left) and 
                the corresponding time to reach the peak infection (right).}
                \label{fig5}
            \end{figure}
        
    \section{Parameter Sensitivity Analysis}
        We perform a sensitivity analysis on the parameters ranges given in \autoref{tab:params} for 
        the ODE models using a uniform distribution for the values. Latin hypercube sampling (LHS), 
        first developed by McKay et al. \cite{refr31, refr34}, with the statistical sensitivity measure 
        partial rank correlation coefficient (PRCC), performs a sensitivity analysis that explores a 
        defined parameter space of the model. The parameter space considered is defined by the parameter 
        intervals depicted in \autoref{tab:params}. Rather than simply exploring one parameter at a 
        time with other parameters held fixed at baseline values, the LHS/PRCC sensitivity analysis 
        method globally explores multidimensional parameter space. LHS is a stratified Monte Carlo 
        sampling without replacement technique that allows an unbiased estimate of the average model 
        output with limited samples. The PRCC sensitivity analysis technique works well for parameters 
        that have a nonlinear and monotonic relationship with the output measure. The PRCC presented in 
        \autoref{fig6} shows how the output measure is influenced by changes in a specific parameter value 
        when the linear effects of other parameter values are removed. The PRCC values were calculated as 
        Spearman (rank) partial correlations using the partialcorr function in MATLAB 2020. Their 
        significance, uncorrelated p-values, were also determined. The PRCC values vary between $-1$ and $1$, 
        where negative values indicate that the parameter is inversely proportional to the outcome measure. 
        Following Marino et al. \cite{refr35}, we performed a z-test on transformed PRCC values to rank 
        significant model parameters in terms of relative sensitivity. According to the z-test, parameters 
        with larger magnitude values had a stronger effect on the output measures.

        We start by verifying the monotonicity of the output measures. Monotonicity was observed for all 
        parameters, hence we use PRCC. PRCC analysis of these ranges produces similar results. For the 
        biofilm-phage model, we calculate the PRCC for the following output measures: infected bacteria 
        cells, bacteriophages in the planktonic phase and 10\% population of infected bacteria cells. 
        \begin{table}[htp]
            \begin{center}
                \begin{tabular}{|l|l| c|c |c|c|}
             	    \hline
             	    & Parameter & \multicolumn{2}{c|}{Values} \\ 
             	    \cline{3-4}
            	    {}&{} & Baseline & Range \\ \hline 
                	$\lambda_B$ & biofilm bacteria growth rate & 0.06   &   \\ 
                	$\lambda_P$ & planktonic bacteria growth rate & 0.6931$^{**}$  & (0.4, 0.8)    \\
                	$k$ & Monod constant & 6.3   &(2-8)   \\
                	$\gamma_1$ &  phage induced detachment rate &  0.4  &(0.1, 2)    \\
                	$\gamma_2$ & natural detachment rate & 0.08  &(0.02, 0.14)   \\
                	$\phi_1$ & biofilm predation rate & 0.075  &(0.05, 0.1)   \\ 
                	$\phi_2$ & planktonic predation rate & 0.075  &(0.05, 0.1)   \\ 
                	$\zeta_1$ & Monod saturation & 0.075  &(0.05, 0.1)   \\ 
                	$\zeta_2$ & Monod saturation & 0.075  &(0.05, 0.1)   \\ 
                	$c_1$ & biofilm phage lysis & 0.075  &(0.05, 0.1)   \\ 
                	$c_2$ & planktonic phage lysis & 0.075  &(0.05, 0.1)   \\ 	
                	$p$ & phage detachment rate & 6.3   &(2-8)   \\
                	$q$ & phage reattachment rate & 6.3   &(2-8)   \\
                	$\tau$ & Average latency time & 6.3   &(2-8)   \\
                	$b$ & Burst size & 6.3   &(2-8)   \\
                	$\alpha$ & Infection decay rate & 6.3   &(2-8)   \\
                	\hline
                \end{tabular}
                \caption{\label{tab:params} Baseline parameter values are used in all simulations. 
                The range of values presented in this table is used  for the parameter sensitivity analysis}
            \end{center}
        \end{table}


        \begin{figure}[!ht]
            \includegraphics[scale=0.4]{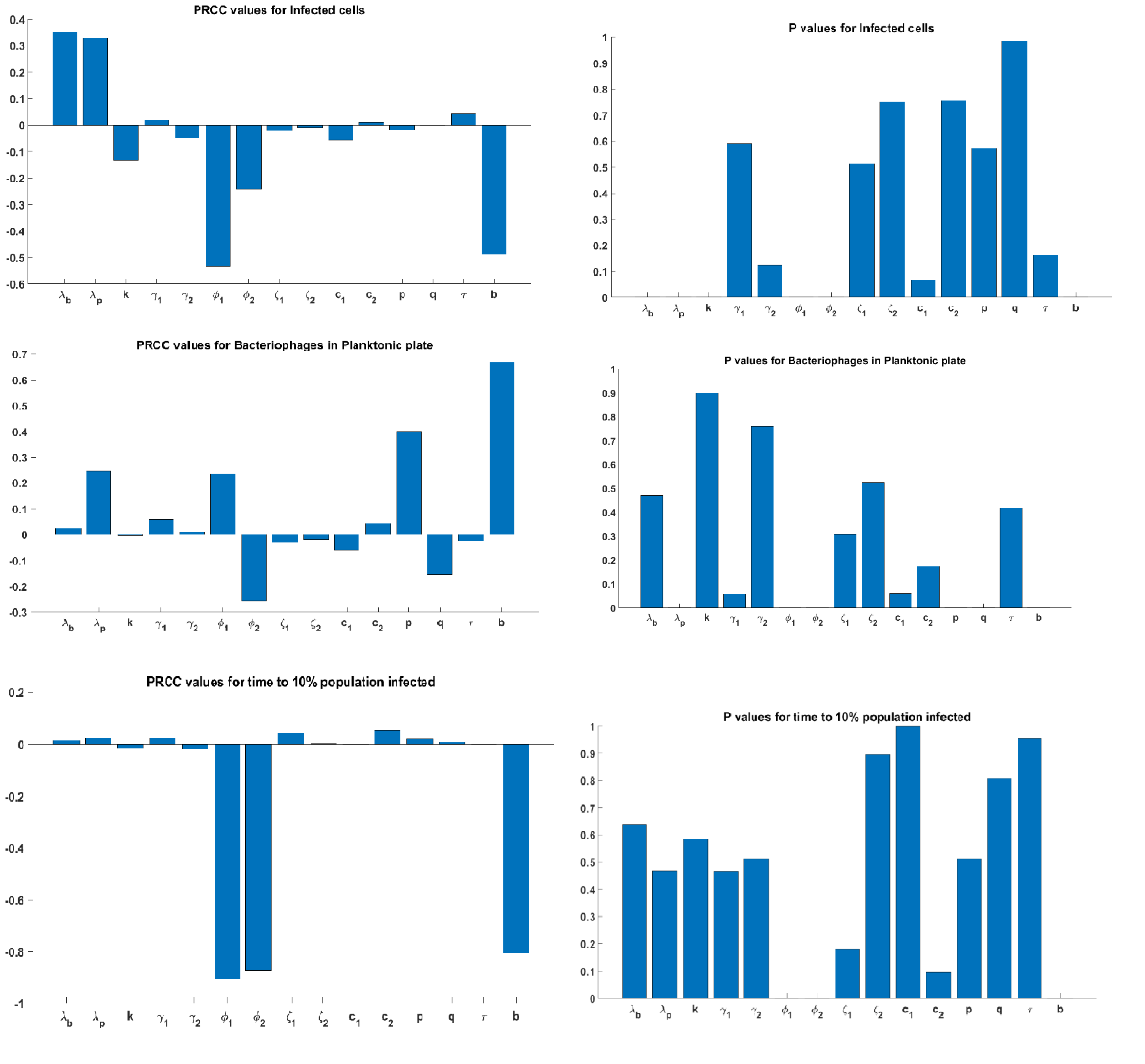}
            \caption{{\bf PRCC} This is the partial rank correlation coefficient for the sensitivity 
            analysis of the parameters}
            \label{fig6}
        \end{figure}



    \section{Discussion}
        We have developed a bacteria-phage interaction model within a biofilm in a cystic fibrosis patient. 
        The model considers bacteria in a biofilm and planktonic phase. The model assumes that the 
        interactions are region-specific, which means that the phages in the planktonic phase can only 
        interact with the planktonic bacteria while the phages in the biofilm can only interact with 
        biofilm bacteria cells. The model speculates that the burst size could control the biofilm growth. 
        In an effort to understand the state of the disease over time, we developed a stochastic model 
        with which we could investigate the probabilities of reaching peak infection within a short time.

        The model in this study can be easily adopted to investigate the effect of factors such as 
        temperature and pH value on middle ear infection. Experimental study in \cite{osgood, osgoodref} 
        revealed that the pH of middle ear fluid collected from acute otitis media  of children could affect 
        biofilm formation, and biofilm formation is limited or completely absent under aerobic conditions 
        as likely to happen, therefore the current model in this study can be adopted with the inclusion 
        of these specific factors to understand the interaction of phages and bacteria in middle ear 
        infection.

        Our assumptions and findings are consistent with the dynamics associated with biofilms. 
        For instance, one of our main assumptions confirms that the phage interaction rate in the biofilm 
        is different from the planktonic since the bacteria in the biofilm are dense and encased with 
        extracellular polymeric substances, this is consistent with several in vitro settings 
        \cite{refr6, refr13, refr17, refr18, refr19, refr36, refr37}.

        In connecting models to experiment, our model is not able to explain the biofilm occupancy which 
        will require the spatial components incorporated into the model, the spatial structures can 
        definitely be therapeutically relevant. For example,  understanding the biofilm matrix and EPS 
        will help to understand the actual interaction rates within the biofilm; the bacteria occupancy 
        in the biofilm will help to determine if the phage interactions is at the biofilm surface, 
        mimicking a lollipop-like degradation, or from within the biofilm, thus forming cavities. 
        Another extension of this model are to: (a) investigate a combination therapy that will 
        involve antibiotics and immune response, (b) investigate the factors that influence biofilm 
        formation and how they can be manipulated to prevent and eliminate biofilm-associated diseases 
        in other areas of the body.

\end{document}